\def\openone{\leavevmode\hbox{\small1\kern-6.3pt\normalsize1}}
\def\im{\mbox{Im \,}}
\def\re{\mbox{Re \,}}
\newtheorem{corollary}{Corollary}
\newtheorem{proof}{Proof}
\newtheorem{remark}{Remark}
\newtheorem{proposition}{Proposition}
\begin{document}

\allowdisplaybreaks

\arraycolsep=2pt


\title{On Asymptotic Dynamical Regimes of Manakov $N$-soliton Trains in Adiabatic Approximation}


\author{Vladimir S. Gerdjikov$^{1}$, Michail D. Todorov$^{2}$\\[1.mm]
\small\it $^1$National Research Nuclear University MEPHI, 115409 Moscow, Russian Federation \\
\small\it Institute of Mathematics and Informatics  Bulgarian Academy of Sciences,   1113 Sofia, Bulgaria \\
\small\it Institute for Advanced Physical Studies, New Bulgarian University, 1618 Sofia, Bulgaria, \\
\small\it email: vgerdjikov@math.bas.bg \\[2.mm]
\small\it $^2$Dept of Mathematics and Statistics, San Diego State University, 92182-0005 San Diego, CA, USA \\
\small\it Faculty of Applied Math. and Informatics,  Technical University of Sofia, 1000 Sofia,  Bulgaria, \\
\small\it e-mail: mtod@tu-sofia.bg}
\date{}

\maketitle

\begin{abstract}
We analyze the dynamical behavior of the $N$-soliton train in the adiabatic approximation of the  Manakov model.
The evolution of Manakov $N$-soliton trains is described by the complex Toda chain
(CTC) which is a completely integrable dynamical model.
Calculating the eigenvalues of its Lax matrix allows us to determine the asymptotic velocity of each soliton. So we describe sets of soliton parameters that ensure one of the two main types of asymptotic regimes: the bound state regime (BSR) and the free asymptotic regime (FAR).
In particular we find explicit description of special symmetric configurations of $N$ solitons that ensure BSR and FAR.
We find excellent matches  between the trajectories of the solitons predicted by CTC  with the ones calculated numerically from the Manakov system for wide classes of soliton parameters.
This confirms the validity of our model.\medskip

\textbf{Keywords:} Manakov model, soliton interactions, adiabatic approximations
 complex Toda chain
\end{abstract}

\section{Introduction}

The solitons and their interactions find numerous applications of  in many areas of today nonlinear physics, such as hydrodynamics, nonlinear optics, Bose-Einstein condensates, etc. \cite{Manak, Agr, KivMalo, AndLis, UGL,  ABDK, carretero}. This explains why it is important to study their interactions. The first results on soliton interactions were obtained by Zakharov and Shabat \cite{ZaSha, ZMNP}. There they proved that the nonlinear Schr\"odinger equation
\begin{equation}\label{eq:nls}\begin{split}
iu_t + \frac{1}{2} u_{xx} +|u|^2 u(x,t)=0.
\end{split}\end{equation}
can be integrated by the inverse scattering method (ISM). Then they constructed the $N$-soliton solution of (\ref{eq:nls}) and calculated their limits for $t\to \infty$ and  $t\to -\infty$, assuming that all solitons have different velocities. Comparing the asymptotics they concluded that the soliton interactions are purely elastic, i.e., no new solitons can be created. In addition the solitons preserve their amplitudes and velocities, and the only effect of the interactions are relative shifts of the center of masses and phases.

Later Karpman and Solov'ev proposed another approach to the soliton interactions based on the
adiabatic approximation \cite{KarpS, Karp79}. They proposed to model the $N$-soliton trains of the  NLS eq. (\ref{eq:nls}). By $N$-soliton train they meant  a solution of the NLS eq. with initial condition:
\begin{equation}\label{eq:Nst}\begin{aligned}
u(x,t=0) &= \sum_{k=1}^N \vec{u}_k(x,t=0), &\qquad
u_k(x,t) &= \frac{ 2\nu_k e^{i\phi_k}}{\cosh(z_k)}  ,\\
z_k &= 2\nu_k (x-\xi_k(t)), &\qquad \xi_k(t) &=2\mu_k t +\xi_{k,0}, \\
\phi_k &= \frac{ \mu_k }{\nu_k} z_k + \delta_k(t), &\qquad
\delta_k(t) &=2(\mu_k^2+\nu_k^2) t +\delta_{k,0}.
\end{aligned}\end{equation}
The adiabatic approximation holds true if the soliton parameters  satisfy \cite{KarpS}:
\begin{eqnarray}\label{eq:ad-ap}
&& |\nu _k-\nu _0| \ll \nu _0, \quad |\mu _k-\mu _0| \ll \mu _0,
\quad |\nu _k-\nu _0| |\xi_{k+1,0}-\xi_{k,0}| \gg 1,
\end{eqnarray}
where $\nu _0 = \frac{ 1}{N }\sum_{k=1}^{N}\nu _k$, and $ \mu _0 = \frac{1}{N }\sum_{k=1}^{N}\mu _k$ are the average amplitude and velocity respectively. In fact we have two different scales:
\[ |\nu _k-\nu _0| \simeq \varepsilon_0^{1/2}, \qquad
|\mu _k-\mu _0| \simeq \varepsilon_0^{1/2}, \qquad
|\xi_{k+1,0}-\xi_{k,0}| \simeq \varepsilon_0^{-1}.
\]
In this approximation the  dynamics of the $N$-soliton train is described by
a dynamical system for the $4N$ soliton parameters. What Karpman and Solov'ev did was to derive the dynamical system for the two soliton interactions: a system of 8 equations for the 8 soliton parameters. They were able also to solve it analytically.

Later their results were generalized to $N$-soliton trains \cite{PRL77, PRE55, PLA98}. The corresponding model can be written down in the form :
\begin{equation}\label{eq:nuk0}\begin{aligned}
\frac{ d\lambda _k }{ dt} &= -4\nu _0 \left(e^{Q_{k+1}-Q_k} - e^{Q_k -Q_{k-1}} \right) , \\
\frac{ d Q_k }{ dt} &=  - 4\nu_0 \lambda_k ,
\end{aligned}\end{equation}
where $\lambda _k=\mu _k+i\nu _k $ and
\begin{equation}\label{eq:140.1}\begin{aligned}
Q_k &= -2\nu _0\xi_k + k \ln 4\nu _0^2 - i (\delta _k+\delta _0 + k\pi -2\mu _0 \xi_k), \\
\nu _0 &= \frac{1}{N} \sum_{s=1}^{N} \nu _s, \qquad \mu _0 = \frac{1}{N} \sum_{s=1}^{N} \mu _s, \qquad \delta _0 = \frac{1}{N} \sum_{s=1}^{N} \delta _s.
\end{aligned}\end{equation}
Obviously the system (\ref{eq:nuk0}) becomes the Toda chain with free ends for
the complex variables $Q_k$:
\begin{equation}\label{eq:ctc}\begin{split}
\frac{ d^2 Q_k }{ dt^2} &=  - 4\nu_0 \frac{d \lambda_k}{d t}= 16\nu _0^2 \left(e^{Q_{k+1}-Q_k} - e^{Q_k -Q_{k-1}} \right) \qquad k=2,\dots, N-1  ,\\
\frac{ d^2 Q_1 }{ dt^2} &=  16\nu _0^2 e^{Q_{2}-Q_1} , \qquad \frac{ d^2 Q_N }{ dt^2} =  -16\nu _0^2 e^{Q_{N}-Q_{N-1}} .
\end{split}\end{equation}
which is known as the complex Toda chain (CTC).

It is well known that the standard (real) Toda chain is an integrable system  \cite{Moser, Manak, Fla}. In the case of (\ref{eq:ctc}), which is known as Toda chain with open ends, it was possible to write down its solutions explicitly \cite{Moser}. An important fact is that these solutions depend analytically on their parameters and can be easily generalized to the CTC.

In fact some time ago a special configurations of soliton trains that are modeled by the real Toda chain \cite{Arn, Arn2}. To this end we must choose solitons with equal amplitudes (i.e., $ \nu_k=\nu_0$), vanishing initial velocities ($\mu_k =0$, and out-of phase $\delta_{k+1}-\delta_k=\pi$. It is easy to see that under these assumptions $Q_k$ become real valued and (\ref{eq:ctc}) become the standard Toda chain.

The adiabatic approach of Karpman and Solov'ev has a drawback: it is an approximate method whose precision is determined by $\varepsilon_0$. On the other hand it has the advantages: first, it is not limited only to solitons with different velocities, and second, it can take into account possible perturbations of the NLS \cite{PRL77, PRE55, PLA98}.

Another important generalization of the NLS equation is known as the Manakov model \cite{Manak} (vector NLS):
\begin{equation}\label{eq:vnls}\begin{split}
i\vec{u}_t + \frac{1}{2} \vec{u}_{xx} +(\vec{u}^\dag, \vec{u}) \vec{u}(x,t)=0.
\end{split}\end{equation}
The corresponding vector $N$-soliton train is determined by the initial condition:
\begin{equation}\label{eq:Nstr}\begin{aligned}
\vec{u}(x,t=0) &= \sum_{k=1}^N \vec{u}_k(x,t=0), &\qquad
\vec{u}_k(x,t) &= {2\nu_k e^{i\phi_k}\over \cosh(z_k)} \vec{n}_k , \\
z_k &= 2\nu_k (x-\xi_k(t)), &\qquad \xi_k(t) &=2\mu_k t +\xi_{k,0}, \\
\phi_k &= \frac{ \mu_k }{ \nu_k} z_k + \delta_k(t), &\qquad
\delta_k(t)&=2(\mu_k^2+\nu_k^2) t +\delta_{k,0},
\end{aligned}\end{equation}
where the constant polarization vector $\vec{n}_k$ is normalized by
\[ \vec{n}_k = \left(\begin{array}{c} \cos(\theta_k) e^{i\gamma_k} \\ \sin(\theta_k) e^{-i\gamma_k}  \end{array}\right), \qquad
(\vec{n}_k ^\dag , \vec{n}_k)= 1, \qquad \sum_{s=1}^{n} \arg \vec{n}_{k;s} =0.\]
Therefore each Manakov soliton is parametrized by 6 parameters.

It was natural to extend the Karpman-Solov'ev method to the Manakov model. The result is known as
the generalized CTC (GCTC)  \cite{VG98, VG03, GDM07, BJP38}. Of course later the GCTC was also adapted to treat the effects of several types of perturbations on solitons \cite{GBS05, MTVGAK1, MTVGAK2, MiWaF, SchDok, DokSch}.

The advantage of the integrability of the CTC and GCTC is in the fact that knowing the initial set of soliton parameters one can predict the asymptotic regime of the soliton train \cite{PRL77, PRE55, PLA98}. On the other hand it is possible to find the set of constraints on the soliton parameters that would ensure given asymptotic regime. These constraints were derived and analyzed for $2$ and $3$-soliton trains; for larger number of solitons only fragmentary results such as the quasi-equidistant propagation of solitons \cite{PLA98}.

The aim of the present paper is to reinvestigate these results and to demonstrate several configuration of multisoliton trains for which one can predict that they will go into bound state regime (BSR) or into free asymptotic regime (FAR).
In Section 2 we outline the derivation  of the GCTC model, see eq.  (\ref{eq:Vnuk0}) below which
now depends also on the polarization vectors $\vec{n}_k$ and models the behavior of the
$N$-soliton train of the vector NLS. We also formulate the Lax representation for the GCTC and explain how it can be used to determine the asymptotic regime of the soliton train.
In Section 3 we formulate two classes of explicit constraints on the soliton parameters that are responsible for BSR and FAR. The first class are generic conditions that ensure that the Lax matrix becomes either real or purely imaginary. The second class are based on special explicit constraints on the soliton parameters that make the eigenvalues of the Lax matrix proportional to each and easier to establish if they are real ir purely imaginary.

\section{Preliminaries}
\subsection{Variational Approach and Generalized CTC  }

The Lagrangian of the vector NLS perturbed by external potential is:
\begin{equation}\label{eq:Lag}\begin{aligned}
\mathcal{L}[\vec{u}] &= \int_{-\infty}^{\infty} dt\;  {i\over 2} \left[
(\vec{u}\dag ,\vec{u}_t) -(\vec{u}_t^\dag,\vec{u})  \right] - H,\qquad
H[\vec{u}] &=\int_{-\infty}^{\infty} dx\;  \left[ -{1\over 2} (\vec{u}_x^\dag,\vec{u}_x)
+ \frac{1}{2} (\vec{u}^\dag,\vec{u})^2  \right].
\end{aligned}\end{equation}
Then the Lagrange equations of motion:
\begin{eqnarray}\label{eq:120.3}
{d\over dt} \frac{ \delta \mathcal{L} }{ \delta \vec{u}_t^\dag} -
\frac{ \delta \mathcal{L} }{ \delta \vec{u}\dag} =0,
\end{eqnarray}
coincide with the vector NLS with external potential $V(x)$.

Next we insert $\vec{u}(x,t) = \sum_{k=1}^{N} \vec{u}_k(x,t)$ (see eq. (\ref{eq:Nstr})) and integrate
over $x$ neglecting all terms of order $\epsilon$ and higher. In doing this we assume that
$\xi_1<\xi_2 < \cdots < \xi_N$ at $t=0$ and use the fact, that only the nearest neighbor solitons
will contribute. All integrals of the form:
\begin{equation}\label{eq:int1}\begin{split}
\int_{-\infty}^{\infty} dx\; (\vec{u}_{k,x}^\dag,\vec{u}_{p,x}), \qquad
\int_{-\infty}^{\infty} dx\;  (\vec{u}_k^\dag,\vec{u}_p ),
\end{split}\end{equation}
with $|p-k| \geq 2$ can be neglected. The same holds true also for the integrals
\[ \int_{-\infty}^{\infty} dx\;  (\vec{u}_k^\dag,\vec{u}_p) (\vec{u}_s^\dag,\vec{u}_l), \]
where at least three of the indices $k,p,s,l$ have different values.
In doing this  key role play  the following integrals:
\begin{equation}\label{eq:J}\begin{aligned}
{\cal   J}_{2}(a) &=\int_{-\infty}^{\infty} \frac{dz \,  e^{iaz} }{ 2 \cosh^2 z} =  
\frac{ \pi a }{ 2\sinh \frac{ a\pi }{ 2 } } ,\\
K(a,\Delta )&\equiv  \int_{-\infty }^{\infty } 
\frac{ dz\; e^{iaz}  }{ 2\cosh z \cosh (z+\Delta ) } =  
\frac{ \pi (1-e^{-ia\Delta }) }{ 2i \sinh(\Delta )\sinh(\pi a/2) },
\end{aligned}\end{equation}

Thus  after long calculations we obtain:
\begin{equation}\label{eq:139.1}
\begin{aligned}
\mathcal{L} &= \sum_{k=1}^{N} \mathcal{L}_k + \sum_{k=1}^{N}
\sum_{n=k\pm 1} \widetilde{\mathcal{L}}_{k,n}, &\quad
\mathcal{L}_{k,n} &= 16\nu _0^3 e^{-\Delta_{k,n}}(R_{k,n}+R_{k,n}^*), \\
R_{k,n}&= e^{i(\widetilde{\delta} _n- \widetilde{\delta} _k)}
(\vec{n}_k^\dag \vec{n}_n), &\quad   \widetilde{\delta }_k &=\delta _k- 2\mu _0\xi_k, \\
 \Delta _{k,n} &=2s_{k,n}\nu _0(\xi_k -\xi_n)\gg 1, &\quad
s_{k,k+1} &=-1, \qquad s_{k,k-1}=1,
\end{aligned}
\end{equation}
where
\begin{equation}\label{eq:123.2}\begin{split}
\mathcal{L}_k &=  -2i\nu _k \left(
(\vec{n}_{k,t}^\dag,\vec{n}_k)-(\vec{n}_k^\dag,\vec{n}_{k,t})
\right) +8\mu _k\nu _k \frac{ d\xi_k }{dt }  - 4\nu _k 
\frac{ d\delta _k }{ dt }
-8\mu _k^2\nu _k + \frac{  8\nu _k^3 }{ 3 }
\end{split}\end{equation}

The equations of motion are given by:
\begin{eqnarray}\label{eq:eq_m}
\frac{ d  }{ dt } \frac{ \delta \mathcal{L} }{\delta p_{k,t}} -
\frac{ \delta \mathcal{L}  }{\delta p_k} =0,
\end{eqnarray}
where $p_k $ stands for one of the soliton parameters: $\delta _k $,
$\xi_k $, $\mu _k $, $\nu _k $ and $\vec{n}^\dag_k $.
The corresponding system  is a generalization of CTC:
\begin{equation}\label{eq:Vnuk0}\begin{aligned}
\frac{ d\lambda _k }{dt} &= -4\nu _0 \left(e^{Q_{k+1}-Q_k} (\vec{n}_{k+1}^\dag ,\vec{n}_{k}) -
e^{Q_k -Q_{k-1}} (\vec{n}_{k}^\dag ,\vec{n}_{k-1}) \right)  , \\
\frac{ d Q_k }{dt} &=  - 4\nu_0 \lambda_k ,\qquad \frac{ d\vec{n}_k  }{ dt } =  \mathcal{ O}(\epsilon),
\end{aligned}\end{equation}
where again $\lambda _k=\mu _k+i\nu _k $ and the other variables are given by (\ref{eq:140.1}).
Now we have additional equations describing the evolution of the polarization vectors.
But note, that their evolution is slow, and in addition the products  $(\vec{n}_{k+1}^\dag ,\vec{n}_{k})$
multiply the exponents $e^{Q_{k+1}-Q_k}$ which are also of the order of $\epsilon$. Since we are
keeping only terms of the order of $\epsilon$ we can  replace   $(\vec{n}_{k+1}^\dag ,\vec{n}_{k})$
by their initial values
\begin{equation}\label{eq:dnk}\begin{split}
\left. (\vec{n}_{k+1}^\dag ,\vec{n}_{k}) \right|_{t=0} = m_{0k}^2 e^{2i\sigma_{k}}, \qquad k=1,\dots , N-1.
\end{split}\end{equation}

We will consider most general form of the polarization vectors:
\begin{equation}\label{eq:pol1}\begin{split}
| \vec{n}_k \rangle  & = \left(\begin{array}{c} \cos(\theta_k) e^{i\gamma_k} \\ \sin(\theta_k) e^{-i\gamma_k}   \end{array}\right), \\
 \langle \vec{n}_{k+1}^\dag | \vec{n}_k\rangle &= \cos(\theta_{k+1}\cos(\theta_{k}) e^{-i(\gamma_{k+1}-\gamma_k)} + \sin(\theta_{k+1}\sin(\theta_{k}) e^{i(\gamma_{k+1}-\gamma_k)}
= \rho_k e^{i\sigma_k}, \\
\rho_k^2 &= \cos^2(\gamma_{k+1} -\gamma_k) \cos^2(\theta_{k+1}-\theta_k)+ \sin^2(\gamma_{k+1} -\gamma_k) \cos^2(\theta_{k+1}+\theta_k). \\
 \sigma_k &= -\arctan \left( \tan (\gamma_{k+1} -\gamma_k) \frac{\cos(\theta_{k+1}+\theta_k) }{\cos(\theta_{k+1}-\theta_k)} \right), \\
 a_k&= \frac{ \nu_0}{2} \rho_k^2 \exp(-\nu_0(\xi_{k+1}-\xi_k)) \exp( -i (\delta_{k+1} -\delta_k -\sigma_k+\pi)/2).
\end{split}\end{equation}
In our previous papers we considered configurations for which $|\vec{n}_k \rangle $ are real, i.e., $\gamma_k =0$. Note that the effect of the polarization vectors could be viewed as change of the distance between the solitons and between the phases.

The  system (\ref{eq:Vnuk0}) was derived for the Manakov system $n=2$ by other methods in \cite{GKDM09}. There the GCTC model it was tested numerically and found to give very good agreement with the numerical solution of the Manakov model. However the tests were done only for real values of the polarization vectors, i.e., all $\gamma_k=0$, $k=1,\dots, N$. Below we will take into account the effect of $\gamma_k$ onto the dynamical regimes of the solitons.

\subsection{Asymptotic Regimes: General Approach}

We first briefly remind the main results concerning the CTC model \cite{PRL77, PRE55, PLA98, GDM07, GKDM09}.
The CTC is completely integrable model; it allows Lax representation $L_t =[A.L]$, where:
\begin{equation}\label{eq:Laxctc}\begin{split}
L= \sum_{s=1}^{N} \left(b_s E_{ss} +a_s(E_{s,s+1} + E_{s+1,s}) \right), \quad
A= \sum_{s=1}^{N} \left(a_s(E_{s,s+1} - E_{s+1,s}) \right),
\end{split}\end{equation}
where $a_s =\exp ((Q_{s+1}-Q_s)/2)$, $b_s =\mu_{s,t} +i\nu_{s,t} $ and the matrices
$E_{ks}$ are determined by $(E_{ks})_{pj} =\delta_{kp} \delta_{sj}$. The eigenvalues of $L$ are
integrals of motion and determine the asymptotic velocities.

The GCTC derived in  \cite{VG98, VG03, GDM07, GKDM09, BJP38} is also a completely integrable model. It allows Lax representation just like the standard real Toda chain \cite{Fla, Moser, Manakov} $\tilde{L}_t =[\tilde{A}. \tilde{L}]$, where:
\begin{equation}\label{eq:gLaxctc}\begin{split}
\tilde{L}= \sum_{s=1}^{N} \left(\tilde{b}_s E_{ss} + \tilde{a}_s(E_{s,s+1} + E_{s+1,s}) \right), \quad
A= \sum_{s=1}^{N} \left(\tilde{a}_s(E_{s,s+1} - E_{s+1,s}) \right),
\end{split}\end{equation}
where $\tilde{a}_s =m_{0s}e^{i\sigma_{s}} a_s$, $b_s =\mu_{s} +i\nu_{s} $. Like for the scalar case,
the eigenvalues of $\tilde{L}$ are integrals of motion.
If we denote by $\zeta_s = \kappa_s +i\eta_s$ (resp. $ \tilde{ \zeta}_s=\tilde{ \kappa}+i \tilde{ \eta}_s
$) the set of eigenvalues of $L$ (resp. $\tilde{L}$) then their real parts
$\kappa_s$ (resp. $\tilde{ \kappa}_s$)  determine the asymptotic velocities for the soliton train
described by CTC (resp. GCTC).
Thus, starting from the set of initial soliton parameters  we can calculate $L|_{t=0}$
(resp. $\tilde{ L}|_{t=0}$),  evaluate the real parts of their eigenvalues and thus determine
the asymptotic regime of the soliton train.

\begin{description}
  \item[Regime (i)] $\kappa _k\neq \kappa _j $ (resp. $\tilde{\kappa} _k\neq \tilde{ \kappa} _j $)
  for $k\neq j $, i.e., the asymptotic velocities are all different. Then we have asymptotically
separating, free solitons, see also \cite{Arn, PRL77, PRE55, PLA98}

  \item[Regime (ii)] $\kappa _{1} = \kappa _{2} = \dots = \kappa_{N} =0$
  (resp. $\tilde{ \kappa} _{1} = \tilde{ \kappa} _{2} = \dots = \tilde{ \kappa}_{N} =0$),
  i.e., all $N $ solitons move with the same mean asymptotic velocity,
and form a ``bound state.''

  \item[Regime (iii)] a variety of intermediate situations when one
group (or several groups) of particles move with the same mean
asymptotic velocity; then they would form one (or several) bound
state(s) and the rest of the particles will have free asymptotic
motion.
\end{description}
\begin{remark}\label{rem:1}
The sets of eigenvalues of $L$ and $\tilde{ L}$ are generically different.
Thus varying only the  polarization vectors one can change the asymptotic regime of the
soliton train.
\end{remark}

Let us consider several particular cases.

\begin{description}
  \item[Case 1] $\vec{n}_1 = \cdots = \vec{n}_N$. Since the vector $\vec{n}_1$ is normalized, then
all coefficients $m_{ok}=1$ and $\sigma_{k}=0$. Then the interactions of the vector and scalar
solitons are identical.

  \item[Case 2] $(\vec{n}_{s+1}^\dag ,\vec{n}_{s})=0$. Then the GCTC splits into two unrelated GCTC: one for
  the solitons $\{1,2,\dots,s\}$ and another for $\{s+1,s+2,\dots. N\}$. If the two sets of soliton parameters are such that
  both groups of solitons are in bound state regimes, then these two bound states.

  \item[Case 3] $\langle n_{k+1}^\dag | \vec{n}_k \rangle =m_0 e^{i\varphi_0} $ -- effective change of
distance and phases of solitons. In this case we can rewrite $\tilde{a}_s =\exp((\tilde{Q}_{s+1} -\tilde{Q}_{s})/2)$,
where:
\begin{equation}\label{eq:Qst}\begin{split}
\tilde{Q}_{s+1} - \tilde{Q}_{s} =Q_{s+1} -Q_s + \ln m_0 + i\varphi_0,
\end{split}\end{equation}
i.e., the distance between any two neighboring vector solitons has changed by $\ln (m_0/2\nu_0)$; similarly have  the phases.

\end{description}

\section{Asymptotic Regimes for $N$-Soliton Trains with $N\geq 4$}

The asymptotic regimes for scalar solitons and for small values of $N$ are   known for long time now, see \cite{PRL77, PRE55, PLA98}. Obviously for $N=2$ we have only two possibilities: BSR and FAR. For $N=3$ for the first time there appears MAR when two of the solitons form a bound state while the third one goes away off them. For $N>3$ there were only fragmentary results, see the quasi-equidistant propagation of solitons in \cite{PLA98}.

For the Manakov solitons formally the method is the same.
The idea to use the integrability of CTC in order to develop a tool for the analysis of asymptotic behavior of $N$-soliton trains was developed in \cite{VG98, GDM07, GKDM09, BJP38}. Roughly speaking we have to use the characteristic polynomial of $L_N$ whose generic form is:
\begin{equation}\label{eq:Pz}\begin{split}
P(z) = \det (L_N - z\mathbb{1}) = \sum_{k=0}^{N} p_k(\vec{a},\vec{b}) z^k = \prod_{k=1}^N (z - z_k).
\end{split}\end{equation}
Next we have to analyze the roots $z_k$ and formulate the conditions on the soliton parameters for which
\begin{equation}\label{eq:zkA}\begin{split}
\mbox{i)} \qquad \re z_k=0; \qquad \mbox{ii)} \qquad \im z_k=0;
\end{split}\end{equation}
Formally condition i) in (\ref{eq:zkA}) ensures the BSR, while condition ii) in (\ref{eq:zkA}) is responsible for the FAR.

However each soliton now has 6 parameters, so 3, 4 and 5 solitons will be parametrized by 18, 24 and 30 parameters respectively. The large number of parameters makes it difficult to derive explicit analytical results, or to do an exhaustive numerical studies. Of course some configurations of Manakov solitons behave just like the scalar ones. This happens if all $\vec{n}_k $ are equal. Naturally our aim is consider more interesting cases and demonstrate the important role that the polarization vectors play for the soliton interactions. Indeed $m_{0k}$ in (\ref{eq:dnk}) take any value from 0 to 1, i.e., they `regulate' the strength of the interaction. In particular, if the polarization vectors of two neighboring solitons are orthogonal, then they do not interact. In addition the phases $\sigma_k$ modify the phase difference of the solitons which is a substantial factor in their interaction.


Situations when we have 2, 3 and 4 solitons are easier because we can write down explicit formulae for $z_k$ in terms of the soliton parameters in the generic case. For two and three solitons most of this analysis for scalar solitons were done \cite{PRL77, PRE55, PLA98}. For bigger values of $N$ such formulas are not done even for the scalar case, in which the number of the soliton parameters are $4N$. For $N=4$ already the formulae for $z_k$ are involved; in addition the number of the parameters is $4N=16$. Therefore for $N\geq 4$ even for the scalar case only special configurations of soliton parameters are known. They are related to special choices of the soliton parameters that simplify the characteristic polynomial so that it reduces to, say a biquadratic equation.
In addition, when it comes to Manakov solitons, the number of the parameters becomes $6N$.

Our aim here will be: first to revisit the particular cases considered before and, second, to propose special soliton configurations responsible for the BSR and FAR for any number of solitons. We will illustrate our results by several figures.

\subsection{Asymptotic Regimes for Manakov Solitons}

Let us now outline some effective ways of choosing soliton parameters that would ensure
given asymptotic behavior of the solitons.
The soliton parameters of the Manakov $N$-soliton train are $6N$ and detailed study of the regions in which the solitons will develop given asymptotic regime does not seem possible.
However we will outline several ways to effectively pick up configurations ensuring BSR or FAR asymptotic regimes.

Let us also remind several important issues that one needs to consider. First we need to specify what we will consider as asymptotic state. Obviously we need a criterium that would ensure us that we are in the asymptotic region. In our case we have two scales: $\epsilon^{1/2}$ and $\epsilon$ that are fundamental for the adiabatic approximation. It is reasonable to assume that the asymptotic times must be of the order of $1/\epsilon$. Our choices of soliton parameters are such that $\epsilon \simeq 10^{-2}$. So one could expect that the asymptotic times would be of the order of $\epsilon^{-1} \simeq 100$. At the same time we extend our numerics to about $t_{\rm as} \simeq 1000$ and in most cases we find good match between the CTC prediction and the numerics of Manakov during all that period. This means that CTC models the Manakoc model much better that we can expect. We can see from the figures presented here and from many others that we have done that the match could be much better.

Indeed, let us assume that we know how to split the $30$-dimensional space of our soliton parameters into regions that correspond to the different asymptotic regimes. Obviously, if we choose the soliton parameters to be close to the `border' between two different regimes we can expect that we would have a `transition' area between the regimes, so the deviation from the CTC model will come up sooner than 1000. This is what we can see in the right panels of  Figures \ref{fig:2},  \ref{fig:3}. There for $t>300$ we see that the bound state of 5 solitons in fact transforms into a MAR. The first and the fifth solitons `peel off' and go freely away, and the other three still stay in a BSR. It seems that increasing the differences between  the amplitudes stabilizes  the BSR.

The general criterium that ensures FAR or BSR is based on the following well known proposition coming from linear algebra.

\begin{proposition}\label{thm:1}
Let $L_0$ be symmetric $L_0= L_0^T$ matrix with real-valued matrix elements. Then its eigenvalues $z_{0j}$ will be real and different, i.e., $z_{0j}\neq z_{0k}$ for $k\neq j$.
\end{proposition}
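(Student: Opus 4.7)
The plan is to prove the two assertions --- reality and distinctness --- separately, reading the proposition in the only way it can be correct, namely with the tridiagonal/Jacobi structure of the Lax matrix in (\ref{eq:Laxctc}) in force. Bare symmetry with real entries is not enough for distinctness (the identity matrix is an immediate counterexample), so the argument will use the tridiagonal form together with the non-degeneracy of the off-diagonal entries $a_s = \exp((Q_{s+1}-Q_s)/2) \ne 0$, which holds for the CTC/GCTC as long as the solitons remain separated.

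For reality I would invoke the standard Hermitian argument: since $L_0 = L_0^T$ has real entries, $L_0 = L_0^\dagger$, so for any eigenpair $(z_{0j}, v_j)$ the two ways of reading $v_j^\dagger L_0 v_j$ give $z_{0j}\|v_j\|^2 = \bar z_{0j}\|v_j\|^2$, hence $z_{0j} \in \mathbb{R}$.

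For distinctness I would rewrite $(L_0 - z\mathbb{1})v = 0$ componentwise as the three-term recurrence $a_{s-1} v_{s-1} + (b_s - z) v_s + a_s v_{s+1} = 0$, with the boundary convention $a_0 v_0 = a_N v_{N+1} = 0$. Because every $a_s$ is nonzero, the value of $v_1$ determines $v_2, \ldots, v_N$ uniquely by forward recursion, so $\dim \ker(L_0 - z\mathbb{1}) \le 1$ for every $z \in \mathbb{C}$. Hence each eigenspace is one-dimensional; combining this with orthogonal diagonalizability of a real symmetric matrix forces all $N$ eigenvalues to be simple, and in particular pairwise distinct.

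The only substantive point, and where the proof must be careful, is precisely this appeal to the Jacobi structure: the reality part is the textbook Hermitian computation, while the distinctness part would be false under the literal hypotheses of the proposition and genuinely requires the tridiagonality of $L_0$ from (\ref{eq:Laxctc}) together with $a_s \ne 0$ for all $s$.
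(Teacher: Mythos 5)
Your proof is correct and complete, but there is essentially nothing in the paper to measure it against: the Proposition is asserted there as a ``well known proposition coming from linear algebra'' with no proof given, the only proof environment in that section being the one-line deduction of the Corollary via $L_1=iL_0$. The real contribution of your write-up is the observation that the statement is false as literally written --- a real symmetric matrix has real eigenvalues, but nothing forces them to be distinct (take $L_0$ to be the identity matrix). Distinctness is rescued only by the Jacobi structure of the Lax matrix (\ref{eq:Laxctc}): real symmetric, tridiagonal, with off-diagonal entries $a_s=\exp((Q_{s+1}-Q_s)/2)\neq 0$; for the GCTC (\ref{eq:gLaxctc}) one must additionally assume $\tilde a_s\neq 0$, i.e., no two neighbouring polarization vectors orthogonal, since otherwise the chain decouples (the paper's Case~2) and repeated eigenvalues can occur. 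Your two-step argument --- the standard Hermitian computation for reality, and the three-term recurrence giving $\dim\ker(L_0-z)\le 1$ for every $z$, combined with diagonalizability of a real symmetric matrix to force simple spectrum --- is exactly the classical proof that Jacobi matrices have simple eigenvalues, and both steps are sound. Since you have in effect strengthened the hypotheses, it would be cleaner to restate the Proposition with the tridiagonal form and the condition $a_s\neq 0$ made explicit, rather than leaving the repair implicit in the phrase ``the only way it can be correct.''
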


\begin{corollary}\label{cor:1}
Let $L_1$ be symmetric (not hermitian) $L_1= L_1^T$ matrix with purely imaginary matrix elements. Then its eigenvalues $z_{1j}$ will be purely imaginary and different, i.e., $z_{1j}\neq z_{1k}$ for $k\neq j$.
\end{corollary}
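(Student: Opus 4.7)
The plan is to reduce the corollary directly to the preceding proposition by factoring out the imaginary unit. Concretely, I would write $L_1 = i L_0$, where $L_0 := -i L_1$ has purely real matrix elements by hypothesis. The symmetry condition $L_1 = L_1^T$ immediately gives $L_0 = L_0^T$, since transposition and multiplication by the scalar $i$ commute. Hence $L_0$ is a real symmetric matrix of the type covered by Proposition \ref{thm:1} (in particular, when $L_1$ is the Lax matrix of the form (\ref{eq:gLaxctc}), $L_0$ inherits the Jacobi/tridiagonal structure with nonvanishing off-diagonal entries, which is what makes the eigenvalues not only real but also pairwise distinct).

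Next I would invoke Proposition \ref{thm:1} to conclude that the eigenvalues $z_{0j}$ of $L_0$ are all real and mutually distinct, $z_{0j} \neq z_{0k}$ for $j \neq k$. Finally, since $L_1 = i L_0$, the spectrum of $L_1$ is obtained by multiplying the spectrum of $L_0$ by $i$, so $z_{1j} = i z_{0j}$. These values are purely imaginary (as $z_{0j} \in \mathbb{R}$) and remain pairwise distinct (multiplication by the nonzero scalar $i$ is injective on $\mathbb{C}$). This yields exactly the conclusion of the corollary.

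There is essentially no serious obstacle here; the only point requiring care is making sure that the hypotheses of Proposition \ref{thm:1} are actually met by $L_0$ — in particular, that the distinctness of eigenvalues asserted in the proposition relies implicitly on the Jacobi (tridiagonal with nonzero off-diagonals) structure of the Lax matrix, and that this structure is preserved by the scalar factorization $L_1 = iL_0$. Once this is observed, the argument is a one-line consequence of the proposition together with the basic fact that scaling a matrix by a complex constant scales its spectrum by the same constant.
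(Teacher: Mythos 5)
Your argument is exactly the paper's proof: the paper simply notes that the corollary ``follows directly from the Proposition if we consider $L_1 = iL_0$,'' and you have spelled out the same reduction (realness and symmetry of $L_0=-iL_1$, scaling of the spectrum by $i$). Your added remark that the distinctness claim really rests on the Jacobi (tridiagonal, nonzero off-diagonal) structure of the Lax matrix is a sensible caveat, since a general real symmetric matrix need not have distinct eigenvalues, but it does not change the fact that your route matches the paper's.
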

\begin{proof}
Follows directly from the Proposition if we consider $L_1=iL_0$.
\end{proof}

In addition below we will assume that $\nu_0=0.5$ and $\mu_0=0$.

\begin{figure}[h!]
 \centerline{\includegraphics[width= .49\textwidth]{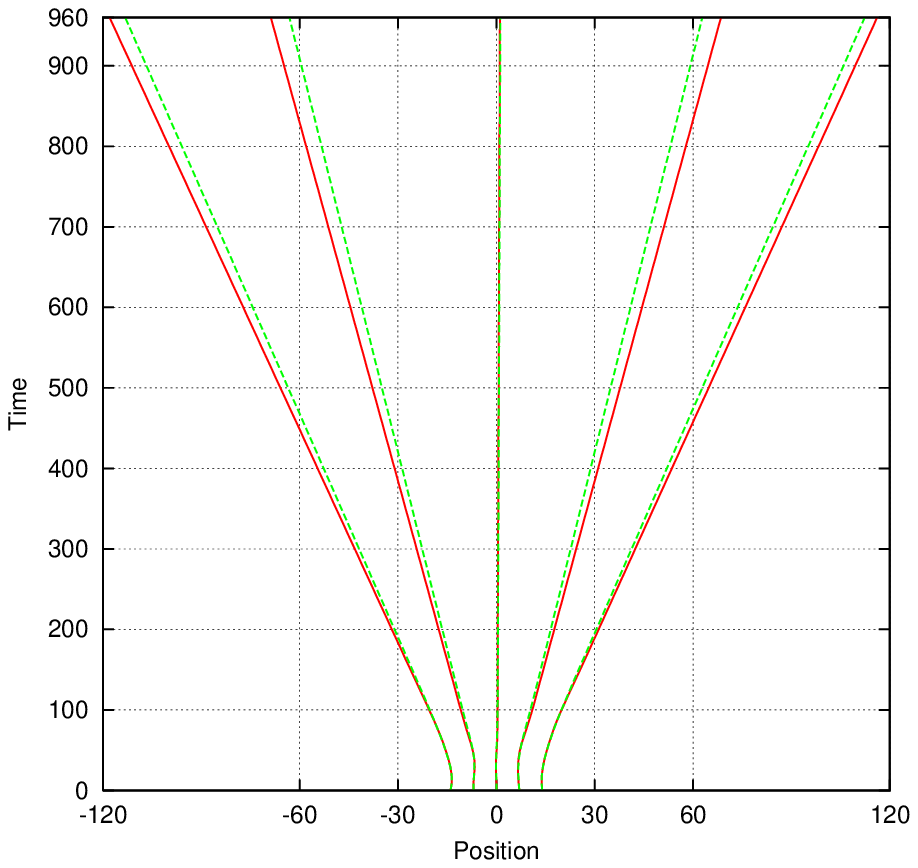}\qquad 
  \includegraphics[width= .49\textwidth]{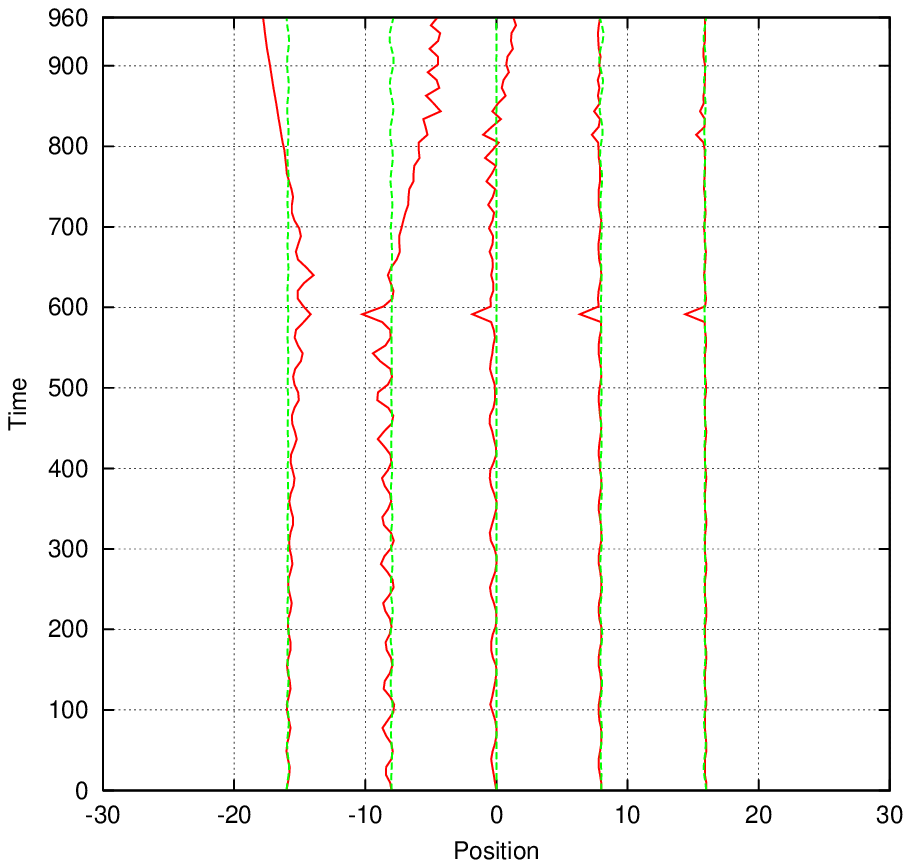}}
 \caption{
 Left panel: FAR with initial conditions  $r_0=7.0$, $\mu_{00}=0.01$, $\nu_{00}=0.0$, $g_0=9$; Right panel: BSR with initial conditions $r_0=8.0$, $\mu_{00}=0.0$, $\nu_{00}=0.05$, $g_0=9$. The rest of the parameters are defined by eqs. (\ref{eq:FAR2}) and (\ref{eq:BSR2}) respectively.}
\label{fig:1}
  \end{figure}

\subsection{Generic FAR Configurations}
These configurations are characteristic for the real Toda chain solved by Moser \cite{Moser, Manakov, Fla}.

In what follows  we choose the polarization vectors $\vec{n}_k$ by setting:
\begin{equation}\label{eq:pol}\begin{split}
 \theta_k = \frac{ k\pi}{13}, \qquad \gamma_k= \frac{k\pi}{g_0}.
\end{split}\end{equation}
where $g_0 =8$, or $g_0 =9$.

For the CTC using the Proposition we obtain:
\begin{equation}\label{eq:FAR}\begin{split}
\im b_k|_{t=0} =0, \qquad \im a_k|_{t=0} = 0,
\end{split}\end{equation}
which means that
\begin{equation}\label{eq:FAR2}\begin{aligned}
\nu_k|_{t=0} &=0.5, \qquad  b_k|_{t=0} = \mu_k|_{t=0} = \mu_{0k}, \qquad \theta_k = \frac{k\pi }{13}, \qquad  \gamma_k = \frac{k\pi}{g_0}, \\
 \xi_{0k} &= (k-3) r_0, \qquad \mu_{0k} = (k-3) \mu_{00}, \quad \nu_{0k} = 0.5 + (k-3) \nu_{00},\\
\delta_{0,1}&=0, \qquad  \delta_{0,k+1} - \delta_{0,k}= \sigma_k,
\end{aligned}\end{equation}

Indeed, from the Proposition the eigenvalues of $L$ will be real and different, which is FAR.
A particular case of (\ref{eq:FAR2}) as configuration ensuring FAR for scalar solitons was noticed long ago, namely choosing solitons with equal amplitudes (i.e., $\Delta \nu_k=0$) and and out-of phase $\delta_{k+1}-\delta_k=\pi$  \cite{Arn}.
However, eq. (\ref{eq:FAR2}) provides more general configurations, in which the solitons may have non-vanishing initial velocities,  see Figure \ref{fig:1}.

\begin{figure}[h!]
  \centerline{\includegraphics[width= .49\textwidth]{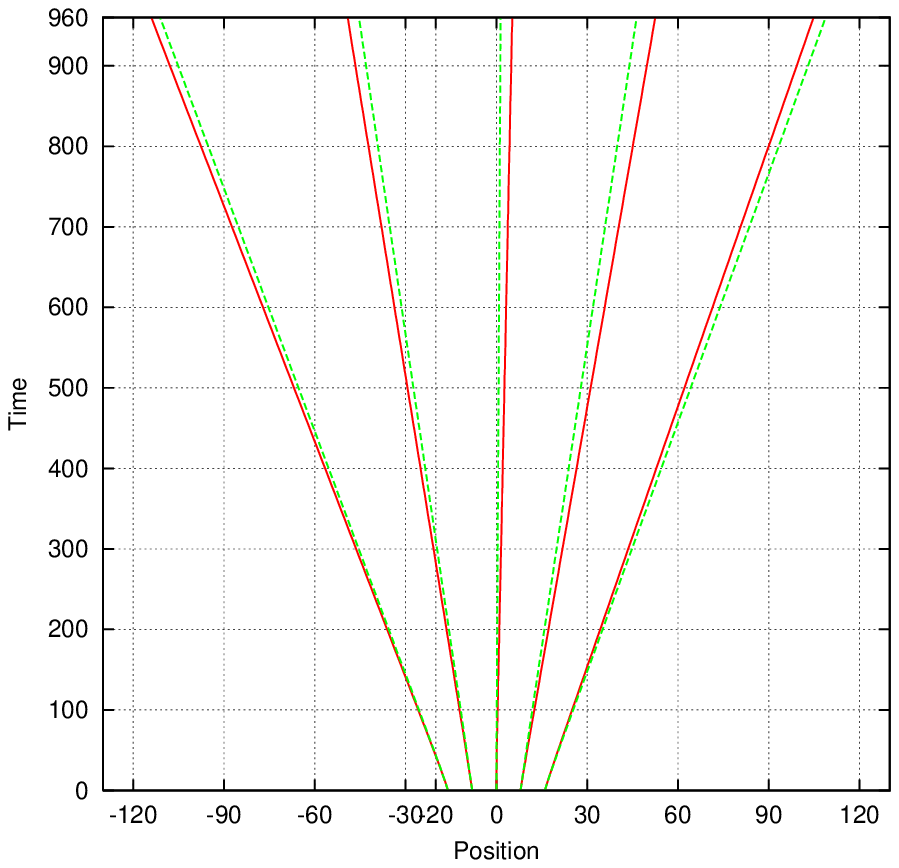}\qquad 
  \includegraphics[width= .49\textwidth]{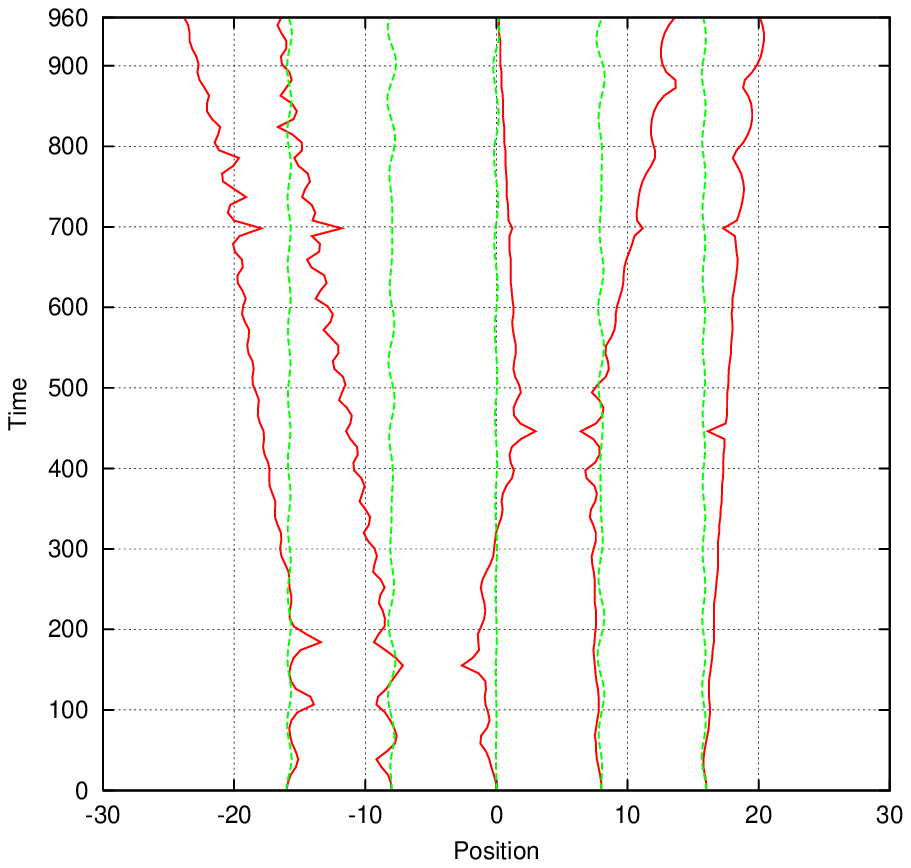}}
 \caption{  Left panel: FAR with initial conditions  $r_0=8.0$, $\mu_{00}=0.02$, $\nu_{00}=0.0$, $g_0=4$; Right panel: BSR with initial conditions $r_0=8.0$, $\mu_{00}=0.0$, $\nu_{00}=0.03$, $g_0=4$. The rest of the parameters are defined by eqs. (\ref{eq:FAR2}) and (\ref{eq:BSR2}) respectively.}
 \label{fig:2}
  \end{figure}

\subsection{Generic BSR Configurations}
Here we use the Corollary and impose on $L$ the conditions:
\begin{equation}\label{eq:BSR}\begin{split}
\re b_k|_{t=0} =0, \qquad \re a_k|_{t=0} = 0,
\end{split}\end{equation}
which means that
\begin{equation}\label{eq:BSR2}\begin{aligned}
b_k|_{t=0} &= i\nu_k|_{t=0} = i\nu_{0k}, \qquad \theta_k = \frac{k\pi }{13}, \qquad  \gamma_k = \frac{k\pi}{g_0}, \\
 \xi_{0k} &= (k-3) r_0, \qquad \mu_{0k} = 0.0, \qquad \nu_{0k} = 0.5 + (k-3) \nu_{00},\\
\delta_{0,1}&=0, \qquad  \delta_{0,k+1} - \delta_{0,k}= \sigma_k +\pi,
\end{aligned}\end{equation}
This is also rather general and simple condition on the soliton parameters that fixes the
initial velocities to be 0, but does not put restrictions (except the adiabatic ones) on the amplitudes and on the initial positions of the solitons.

\subsection{Symmetric Configurations of Soliton Parameters}
In addition to these we find other configurations of soliton parameters that provide FAR or BSR. To this end we use special symmetric constraints on $L$ described below. These constraints will leave only one of $\nu_{0k}$ and $a_{0k}$ independent. As a result the characteristic polynomial of $L$ will factorize and we will find that all roots are proportional to each other.

Let us give few examples of them. We will provide the corresponding Lax matrix, its characteristic polynomial and eigenvalues.

\begin{itemize}
  \item
$N=3$, $P_3 = z (z^2 - 4(a^2 + b^2))$:
  \begin{equation}\label{eq:L3} \begin{split}
L_3 &= \left(\begin{array}{ccc} b & \sqrt{2}a & 0 \\ \sqrt{2}a & 0 & \sqrt{2}a \\ 0 & \sqrt{2}a & -b  \end{array}\right),   \\
z_{1,2} &= \pm 2\sqrt{a^2 +b^2}, \qquad z_3=0;
    \end{split}\end{equation}

\item $N=4$, $P_4 =  (z^2 - a^2 - b^2) (z^2 - 9(a^2 + b^2))$
  \begin{equation}\label{eq:L4}\begin{split}
  L_4 &= \left(\begin{array}{cccc} 3b & \sqrt{3} a & 0 & 0 \\ \sqrt{3} a & b & 2a &0 \\ 0 & 2a & -b & \sqrt{3} a \\ 0 & 0 & \sqrt{3} a & -3b  \end{array}\right),
  \\ z_{1,2} &= \pm \sqrt{a^2 +b^2}, \qquad z_{3,4}  = \pm 3\sqrt{a^2 +b^2};
  \end{split}\end{equation}

 \item $N=5$, $P_5 =  z (z^2 - a^2 - b^2) (z^2 - 4(a^2 + b^2))$
\begin{equation}\label{eq:L5}  \begin{split}
  L_5 &= \left(\begin{array}{ccccc} 2b & \sqrt{3} a & 0 & 0 & 0 \\ \sqrt{2} a & b & 2 a &0 & 0 \\ 0 & 2 a & 0 & \sqrt{3} a & 0 \\ 0 & 0 & \sqrt{3} a & -b & \sqrt{2}a \\
  0 & 0 & 0 & \sqrt{2} a & -2b \end{array}\right)   \\
  z_{1,2} &= \pm \sqrt{a^2 +b^2}, \qquad z_{3,4} = \pm 2\sqrt{a^2 +b^2}, \qquad z_5=0;
 \end{split}\end{equation}

\begin{figure}[h!]
 \centerline{\includegraphics[width= .49\textwidth]{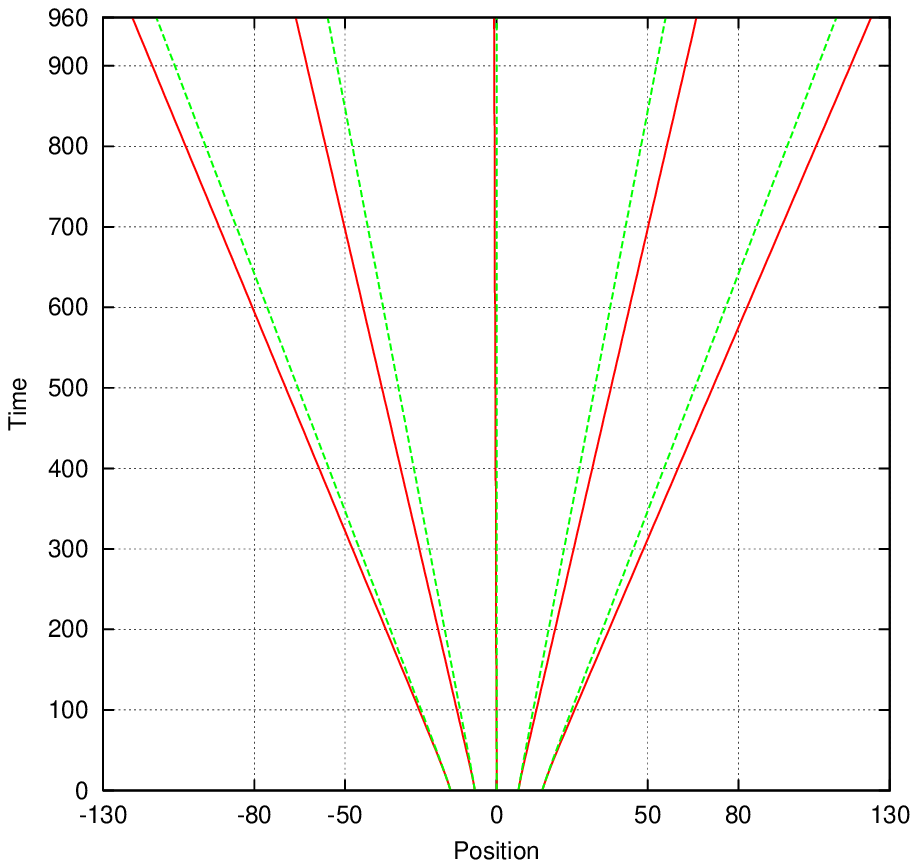}\qquad 
  \includegraphics[width= .49\textwidth]{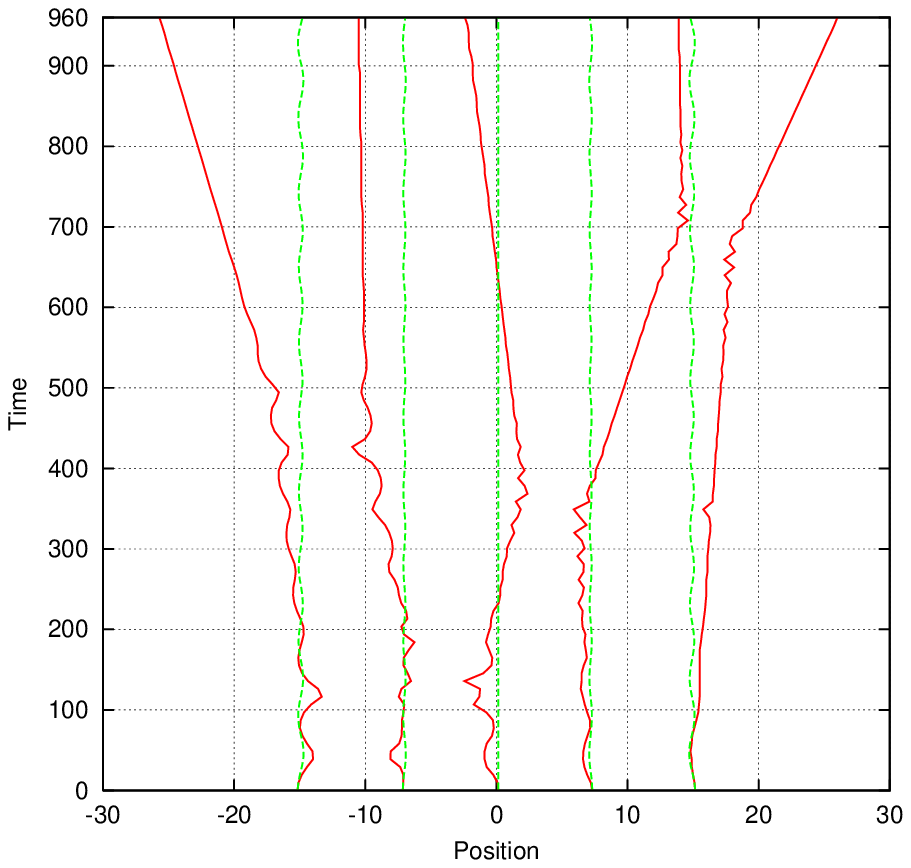}}
 \caption{
 Left panel: FAR with initial conditions  $\mu_{00}=0.02$, $\nu_{00}=0.0$, $g_0=4$; Right panel: BSR with initial conditions $\mu_{00}=0.0$, $\nu_{00}=0.03$, $g_0=4$. The rest of the parameters are defined by eqs. (\ref{eq:far1}) and (\ref{eq:bsr1}) respectively.}
  \label{fig:3}
  \end{figure}

 \item $N=6$, $P_6 =   (z^2 - a^2 - b^2) (z^2 - 9(a^2 + b^2)) (z^2 - 25(a^2 + b^2))$:
 \begin{equation}\label{eq:L6}\begin{split}
  L_6 &= \left(\begin{array}{cccccc} 5b & \sqrt{5} a & 0 & 0 & 0 &0 \\ \sqrt{5} a & 3b & \sqrt{3} a &0 & 0 & 0 \\ 0 & \sqrt{8}a & b & 3 a & 0 & 0\\ 0 & 0 & 3 a & -b & \sqrt{8}a & 0 \\ 0 & 0 & 0 & \sqrt{8} a & -3b  & \sqrt{5} a \\  0 & 0 & 0 & 0 & \sqrt{5} a & -5b  \end{array}\right),  \\
  z_{1,2} &= \pm \sqrt{a^2 +b^2}, \qquad z_{3,4} = \pm 3\sqrt{a^2 +b^2}, \qquad z_{5,6}=\pm 5\sqrt{a^2 +b^2}.
 \end{split}\end{equation}

\end{itemize}
Such examples can be found for any value of $N$; from algebraic point of view they are related to the the maximal embedding of $sl(2)$ as a subalgebra of $sl(N)$.

In order to ensure FAR or BSR we need to impose on $a$ and $b$ the condition that
\begin{equation}\label{eq:cond1}\begin{split}
 \mbox{FAR} \qquad a^2 +b^2 >0, \qquad  \mbox{BSR} \qquad a^2 +b^2 <0.
\end{split}\end{equation}

Initial conditions for BSR of 5 scalar solitons:
\begin{equation}\label{eq:BSR1}\begin{aligned}
 \xi_1 &=-2r_0 +\frac{\ln 6}{2\nu_0}, &\;  \xi_2 &=-r_0 +\frac{\ln 3}{2\nu_0}, &\; \xi_3 &=0, &\;  \xi_4 &=r_0 -\frac{\ln 3}{2\nu_0}, &\;  \xi_5 &=2r_0 -\frac{\ln 6}{2\nu_0},\\
 \nu_k &= 0.5 + (3-k)\nu_{00}, &\;  \mu_k &=0, &\; \delta_{k}&= k\pi, &\quad k&=1,\dots ,5.
\end{aligned}\end{equation}

Initial conditions for FAR of 5 scalar solitons:
\begin{equation}\label{eq:xi0kS}\begin{aligned}
 \xi_1 &=-2r_0 +\frac{\ln 6}{2\nu_0}, &\;  \xi_2 &=-r_0 +\frac{\ln 3}{2\nu_0}, &\; \xi_3 &=0, &\;  \xi_4 &=r_0 -\frac{\ln 3}{2\nu_0}, &\;  \xi_5 &=2r_0 -\frac{\ln 6}{2\nu_0},\\
 \nu_k &= 0.5 , &\;  \mu_k &=(3-k)\mu_{00}, &\; \delta_{k}& =\frac{k\pi}{2}, &\quad k&=1,\dots ,5.
\end{aligned}\end{equation}

For Manakov solitons the initial positions  are determined by:
\begin{equation}\label{eq:xi0kM}\begin{aligned}
\xi_{10} &= -2r_0 - \frac{1}{2\nu_0} \ln \frac{ m_{01}m_{02}m_{03}m_{04}}{6}, &\;
\xi_{20} &= -r_0 - \frac{1}{2\nu_0} \ln \frac{ m_{02}m_{03}m_{04}}{3m_{01}}, \\
\xi_{30} &=  -\frac{1}{2\nu_0} \ln \frac{ m_{03}m_{04}}{m_{01}m_{02}}, \\
\xi_{40} &= r_0 +\frac{1}{2\nu_0} \ln \frac{ m_{01}m_{02}m_{03}}{3m_{04}}, &\;
\xi_{50} &= 2r_0 + \frac{1}{2\nu_0} \ln \frac{ m_{01}m_{02}m_{03}m_{04}}{6},
\end{aligned}\end{equation}

For the numerics we again fix the polarization vectors as in (\ref{eq:pol}) and evaluate $\xi_{0k}$ by the formula (\ref{eq:xi0kM}). The result is:
\begin{equation}\label{eq:inpos}\begin{split}
\xi_{01} = -15..., \qquad  \xi_{02} = -9...
\end{split}\end{equation}

In order to have FAR we choose the amplitudes, velocities and the phases of the solitons by:
\begin{equation}\label{eq:far1}\begin{aligned}
\nu_k &= 0.5, &\; \mu_k &=(k-3)\mu_{00}, &\; k&=1,2,\dots, 5 \\
\delta_{10} &=0, \quad \delta_{20} = \delta_{10}+\sigma_1+\pi, &\quad
\delta_{30} &=  \delta_{10}+\sigma_{1}+\sigma_{2}+\pi, \\
\delta_{40} &=   \delta_{30}+\sigma_{1}+\sigma_{2}+\sigma_{3}+\pi, &\;
\delta_{50} &= \delta_{40}+\sigma_{1}+\sigma_{2}+\sigma_{3}+\sigma_{4}+\pi,
\end{aligned}\end{equation}

For the BSR we choose the amplitudes, velocities and the phases of the solitons by:
\begin{equation}\label{eq:bsr1}\begin{aligned}
\nu_k &= 0.5 +(k-3)\nu_{00}, &\; \mu_k &=0, &\; k&=1,2,\dots, 5\\
\delta_{10} &=0, \quad \delta_{20} = \delta_{10}+\sigma_1, &\quad
\delta_{30} &=  \delta_{10}+\sigma_{1}+\sigma_{2}, \\
\delta_{40} &=   \delta_{30}+\sigma_{1}+\sigma_{2}+\sigma_{3}, &\;
\delta_{50} &= \delta_{40}+\sigma_{1}+\sigma_{2}+\sigma_{3}+\sigma_{4},
\end{aligned}\end{equation}

\subsection{Numeric Values for the Initial Parameters}

In the Tables  we list the numeric values for $m_{0k}$ and $\sigma_k$ for the two typical choices of $\theta_k$ and $\gamma_k$ used above.

\begin{table}
\begin{center}
  \begin{tabular}{|l|r|r|}
    \hline
    $\delta_{0k}$ & left panel & right panel \\ \hline
    $k=1$ &  0.0 & 0.0 \\
    $k=2$ & 2.868037 & -0.273554 \\
    $k=3$ & -0.405708 & -0.405708 \\
    $k=4$ & 2.781038  & -0.360554 \\
    $k=5$ &  -0.150741 & -0.150741 \\
    \hline
  \end{tabular}
\qquad
  \begin{tabular}{|l|r|r|}
    \hline
    $\delta_{0k}$ & left panel & right panel \\ \hline
    $k=1$ &  0.0 & 0.0 \\
    $k=2$ & 2.484841 & -0.656751 \\
    $k=3$ & -1.006917 & -1.006917 \\
    $k=4$ &  2.258187 &-0.883405 \\
    $k=5$ & -0.354039 &  -0.354039 \\
    \hline
  \end{tabular}
  \caption{Initial phases for Fig. \ref{fig:1} and Fig. \ref{fig:2}}\label{tab:2}
\end{center}
\end{table}

\begin{table}
  \centering
  \begin{tabular}{|l|r|r|r|r|}
    \hline
& \multicolumn{2}{|c|}{left panel} & \multicolumn{2}{|c|}{right panel} \\ \hline
   & $\delta_{0k}$ &  $\xi_{0k}$ & $\delta_{0k}$ &  $\xi_{0k}$  \\ \hline
    $k=1$ &  0.0 & -15.154654 & 0.0 & -15.154654 \\
    $k=2$ & 2.484841 & -7.133487 & -0.656751 &  -7.133487 \\
    $k=3$ & -1.006917 & .140982 & -1.006917 & 0.140982 \\
    $k=4$ & 2.258187 &  7.305540 & -0.883405 & 7.305540 \\
    $k=5$ &  -.354039 & 15.154654 & -0.354039 & 15.154654\\
    \hline
  \end{tabular}
  \caption{Initial phases and positions for Fig. \ref{fig:3}}\label{tab:3}
\end{table}

\section*{Conclusions and Discussion}
The above analysis can be extended to any number of solitons. As we mentioned above, the symmetric Lax matrices are realizations of the maximal embedding of the $sl(2)$ algebra as a subalgebra of $sl(N)$. In this case we effectively reduce the $N$-soliton interactions to an effective $2$-soliton interactions. Therefore the symmetric configurations studied above allow only two asymptotic regimes: BSR and FAR. We make the hypothesis that it would be possible to construct more general symmetric Lax matrices that would be responsible for effective $3$-soliton interactions.
In this paper we included numerical tests only for 3 soliton interactions. However previously we have run test starting with 2-solitons and ending with 9-soliton configurations. Our results are that the CTC models adequately not only the purely solitonic interactions, but also the effects of external potentials and other perturbations on them.

An interesting question is how long should we wait for the asymptotic regime. This question is directly related to the other one: What are the limits of applicability of CTC? In our simulations we have chosen  $\varepsilon_0 \simeq 0.01$ which means that the asymptotic time must be of the order of $1/\varepsilon_0 \simeq 100$. At the same time in a number of cases we find good match between the CTC and the numeric solutions of Manakov model even until $1\;000$. This is what we see in our tests in this paper for the free asymptotic regimes (right panels of all Figures). The situation is different for the bound state regimes. While in Figure \ref{fig:1} we see good match until about 700, in Figures \ref{fig:1} and \ref{fig:3} the good match goes until 300. After that the trajectories of CTC keep to the BSR, but some of the real solitons `escape aeay` after that.
However in all cases we find that CTC provides good descriptions until times about three times larger than the asymptotic one.

\section*{Acknowledgements}\label{sec:Ack}
MDT was supported by Fulbright -- Bulgarian-American Commission for Educational Exchange under Grant No 19-21-07.

\end{document}